\newtheorem{theorem}{Theorem}
\newtheorem{definition}{Definition}
\newcommand{\cbc}{${\sf{CBC}}(n,m,k,t)$}
\newcommand{\ecbc}{${\sf{ECBC}}(n,m,k,t,r)$}
\begin{document}

\title{New Results on Erasure Combinatorial Batch Codes\\[.4ex]
  {\normalfont\large
	Phuc-Lu Le\IEEEauthorrefmark{1}, Son Hoang Dau\IEEEauthorrefmark{2},   Hy Dinh Ngo\IEEEauthorrefmark{1}, Thuc D. Nguyen\IEEEauthorrefmark{1}\IEEEauthorrefmark{4}}\\[-1.5ex]\IEEEcompsocitemizethanks{\IEEEcompsocthanksitem\IEEEauthorrefmark{4}Corresponding author} 
}

\author{\IEEEauthorblockA{\IEEEauthorrefmark{1}Faculty of Information Technology, Ho Chi Minh City University of Science, Vietnam \\
\IEEEauthorrefmark{2}The School of Computing
Technologies, RMIT University, Australia
\\ \{lplu, ndthuc, ndhy\}@fit.hcmus.edu.vn, \, sonhoang.dau@rmit.edu.au}}

\maketitle
\thispagestyle{plain}
\pagestyle{plain}

\pagenumbering{gobble}

\IEEEpeerreviewmaketitle

\begin{abstract} 
We investigate in this work the problem of Erasure Combinatorial Batch Codes, in which $n$ files are stored on $m$ servers so that every set of $n-r$ servers allows a client to retrieve at most $k$ distinct files by downloading at most $t$ files from each server. Previous studies have solved this problem for the special case of $t=1$ using Combinatorial Batch Codes.
We tackle the general case $t \geq 1$ using a generalization of Hall's theorem. Additionally, we address a realistic scenario in which the retrieved files are consecutive according to some order and provide a simple and optimal solution for this case.
\end{abstract}

\begin{IEEEkeywords}
Combinatorial batch codes, erasures, multiset, non-adaptive group testing, consecutive files.
\end{IEEEkeywords}

\section{Introduction}


\label{sec:intro}



Combinatorial Batch Codes (CBC) were defined by Paterson {\it et al.} in \cite{paterson2009combinatorial} as a combinatorial version of batch codes (introduced by Ishai {\it et al.}~\cite{ishai2004batch}). More specifically, a CBC allows ones to store $n$ files on $m$ servers (possibly with repetitions among servers) such that any $k$ distinct items can be retrieved by downloading at most $t$ items from each server. The goal is to minimize the total number $N$ of items stored across all servers, given $n$, $m$, $k$ and $t$. To address the practical scenario in distributed storage systems where servers may fail or maintenance frequently, CBC was also generalized to Erasure Combinatorial Batch Codes (ECBC) by J. Jung {\it et al.} ~\cite{jung2018erasure}, which allows up to $r$ servers to fail. 
Several constructions of ECBC in which every item is stored in the same number of servers were developed in~\cite{jung2018erasure}. 

As far as we know, most of the previous works on CBC/ECBC assumed that $t = 1$, i.e., each server can be used at most $t=1$ time. The problem has been approached using various mathematical structures, including dual systems, extremal hypergraphs, and transversal matroids (see, e.g., \cite{bujtas2011optimal,balachandran2014extremal,brualdi2010combinatorial}). A number of works focused on solving CBC for specific cases, e.g. when $k$ is small ($k=3,4,5$) or when $n=m+2$~ \cite{jia2019some}. In almost all cases, the main used method is Hall's theorem from graph theory.
More specifically, by viewing a CBC/ECBC as a matching in a bipartite graph, one can apply Hall's condition to find the optimal total storage.

In this paper, using also Hall's theorem, we extend the previous studies to address the more general case of ECBC for any $r \ge 0$ and $t \ge 1$. 
Note that in practice, allowing the client to download more than one item from each server can improve the efficiency of data retrieval. 
Additionally, motivated by the application in video streaming where several consecutive video chunks are often downloaded at one time, we consider and provide the optimal solution for a special version of ECBC in which the retrieved files are consecutively indexed (instead of being random). Although this version is rather simple to address, it opens up an entirely new research direction on batch codes where batches to be downloaded are not randomly selected but follow some special patterns. Along with the approaching CBC/ECBC problems, we focus on the efficient algorithm to retrieve a list of files from the given ECBC system. Finally, we briefly note the relationship between ECBC and Non-Adaptive Group Testing (NAGT). 

\section{Preliminaries}

\subsection{Problem Definitions}
\label{sec:intro:prob}

We first recall the definition of an ECBC.

\begin{definition} 
\label{def:ECBC}
An Erasure Combinatorial Batch Code denoted by \ecbc{}  is a set system $(X, \mathcal{S})$ where $X = \{1,2,\ldots,n \}$ and $\mathcal{S}$ is a collection of $m$-subsets of $X$, denoted by $S_1,S_2,\ldots,S_m$, such that for every subset $X' \subset X$ of size up to $k$ and for every subset $J$ of $\{1,2,\ldots,m \}$ with $|J| \ge m-r$, there exists a subset $C_j \subset S_j$, $j \in J$, satisfying $|C_j| \le t$ and $X' =  \bigcup\limits_{j\in J}{{{C}_{j}}}.$
The goal is to construct an \ecbc{} that minimizes the total storage $N \triangleq \sum_{j=1}^m |B_j|$. 
\end{definition}

In Definition~\ref{def:ECBC}, the set $X$ can be considered as the list of indices of $n$ files while $S_1, S_2, \ldots, S_m$ are the indices of files that stored on $m$ server. Here, $r$ refers to the maximum number of inactive servers simultaneously.

Next, we recall the well-known Hall's Marriage theorem, which plays an essential role in the constructions of CBC/ECBC.

\subsection{Hall's Marriage Theorem}
Hall's theorem is stated as follows: for positive integers $m$ and $n$, assume that ${{A}_{1}},{{A}_{2}},\ldots ,{{A}_{n}}$ are $n$ subsets of $X=\{1,2,\ldots ,m\}$. If for every subset $I$ of $\{1,2,\ldots ,n\}$, the condition $\left| \bigcup\limits_{i\in I}{{{A}_{i}}} \right|\ge \left| I \right|$ holds, then there exist $n$ distinct elements from these $n$ subsets, one from each set. We refer to the aforementioned condition as the Hall's condition. 
The theorem states that in a simple undirected bipartite graph $G=(X,Y,E)$ with two disjoint sets of vertices $X,Y$ and $E$ as the edges connecting vertices between them, there exists a matching from $X$ to $Y$ if and only if, for any subset $X'$ of $X$ with size $k$, $\left| V(X')\right|\ge \left|X' \right|$, where $V(X')$ represents the set of vertices in $Y$ that are adjacent to some vertex in $X'$. 

In other words, Hall's condition can be applied to a subset $X'$ of $X$ with size $k$ to find a matching from $X'$ to $Y$. This is the other form of Hall’s theorem which is compatible with solving the CBC-ECBC problems and was mentioned by Bujtas and Tuza in ~\cite{bujtas2012relaxations}.

The way to apply Hall’s condition is straightforward: by defining the equivalence between the original problem and Hall's condition, and then establishing the dual condition, we can find the optimal solution for some cases of $m$ and $n$. The dual condition can be understood as the requirement that `the files must be stored on separate servers, so as to prevent servers from storing too many files'. These concepts form the basis for constructing the proof of the general case where $t \ge 1$ and $r \ge 0$ that will be discussed in more detail in the next section.

\subsection{Existing Results for CBC and ECBC}

The incidence matrix $A=(a_{ij})_{m\times n}$ of a \cbc{} or \ecbc{} is defined as: $a_{ij}=1$ if and only if server $i$ stores file $j$, and $a_{ij}=0$ otherwise.
Let $F_j\triangleq \{i : a_{ij} = 1\}$ be the set of 
rows that has a `1' in column $j$ and $S_i = \{j : a_{ij} = 1\}$ be the set of columns that has a `1' in row $i$. In other words, $F_j$ consists of the indices of servers that store file $j$, $j=1,\ldots,n$ and $S_i$ consists of the indices of files that are stored in server $i$, $i = 1,\ldots,m$. 


For some special parameter ranges, the minimum total store of a \cbc{} can be determined.

\begin{theorem}[Paterson, Stinson, and Wei~\cite{paterson2009combinatorial}]
Let $N(m,n,k)$ denote the minimum total storage of a \cbc{} when $t=1$. Then the following statements hold.
\begin{enumerate}
\item If $n=m$ then $N(n,m,k)=n$.
\item If $m=k$ then $N(n,m,k)=kn-k(k-1).$
\item If $n=m+1$ then $N(n,m,k)=m+k.$
\item If $n \ge (k-1)\binom{m}{k-1}$ then $$N(n,m,k)=kn-(k-1)\binom{m}{k-1}.$$
\item If $\binom{m}{k-2} \le n \le (k-1)\binom{m}{k-1}$ then $$N(n,m,k)=n(k-1)-\left\lfloor \frac{(k-1)\binom{m}{k-1}-n}{m-k+1} \right\rfloor.$$
\end{enumerate}    
\end{theorem}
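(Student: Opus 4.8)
The plan is to run everything through the Hall characterisation of CBCs described above. Given a candidate set system, recall $F_j=\{i:j\in S_i\}$; the code is a valid CBC with $t=1$, $r=0$ exactly when, for every $Y\subseteq X$ with $|Y|\le k$, the family $\{F_j:j\in Y\}$ admits a system of distinct representatives, i.e.\ (by Hall) when every $\ell\le k$ files occupy at least $\ell$ servers in total. I would immediately pass to the contrapositive, the \emph{dual condition}: the code is valid iff for every $u\in\{0,1,\dots,k-1\}$ and every set $U$ of $u$ servers, $|\{j:F_j\subseteq U\}|\le u$. Two normalisations help: we may assume $k\le m$ (otherwise the instance is degenerate), and an optimal code may be taken to satisfy $|F_j|\le k$ for all $j$ — a column of weight $>k$ can lose a server without violating any Hall inequality, since that column alone dominates every $|Y|\le k$. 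Then $N=kn-D$ with the ``deficiency'' $D\triangleq\sum_j(k-|F_j|)\ge 0$, and the theorem becomes: find the maximum possible $D$ over valid codes with $n$ files.

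For the lower bounds I would argue as follows. The universal estimate behind statement (4) is a double count over the $(k-1)$-subsets of servers: summing the dual inequality over all of them gives $\sum_j\binom{m-|F_j|}{k-1-|F_j|}=\sum_U|\{j:F_j\subseteq U\}|\le(k-1)\binom{m}{k-1}$, and since $\binom{m-f}{k-1-f}\ge\binom{k-f}{1}=k-f$ for $0\le f\le k-1$ when $m\ge k$, this yields $D\le(k-1)\binom{m}{k-1}$, hence $N\ge kn-(k-1)\binom{m}{k-1}$; the case $m=k$ of (2) is the same count with every binomial inequality tight. For (1) and (3) I would instead stratify files by weight: with $d_i=|\{j:|F_j|=i\}|$, the dual condition at $u=1$ forces $d_1\le m$, at $u=2$ it couples $d_1$ and $d_2$, and so on; feeding these linear inequalities together with $\sum_i d_i=n$ into $D=\sum_{i<k}d_i(k-i)$ gives $N\ge n$ when $n=m$ and $N\ge m+k$ when $n=m+1$. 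Statement (5) is the same philosophy carried through completely: one uses, for every $u\le k-1$ simultaneously, that a $u$-server set already holding $u$ small supports can hold no further one, and then solves the resulting extremal packing; the bound it produces improves on $kn-(k-1)\binom{m}{k-1}$ precisely in the stated range for $n$, and the wasted capacity rounds into the floor term $\left\lfloor\frac{(k-1)\binom{m}{k-1}-n}{m-k+1}\right\rfloor$.

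For the matching constructions I would write down explicit incidence matrices and check validity directly from the dual condition. Case (1): the identity $F_i=\{i\}$, so $N=n$. Case (3): a dedicated file on each of the $m$ servers plus one file on any $k$ servers; every $u$-set with $u\le k-1$ then contains exactly its $u$ dedicated files, so the code is valid with $D=m(k-1)$ and $N=m+k$. Case (4): assign supports so that each $(k-1)$-subset of $\{1,\dots,m\}$ is the support of exactly $k-1$ files, using $(k-1)\binom{m}{k-1}$ files this way and placing the remaining $n-(k-1)\binom{m}{k-1}$ on $k$ servers each; every $(k-1)$-set then contains exactly $k-1$ files and every smaller set contains none, and $N=(k-1)^2\binom{m}{k-1}+k\bigl(n-(k-1)\binom{m}{k-1}\bigr)=kn-(k-1)\binom{m}{k-1}$. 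Case (5): use as many $(k-1)$-subset supports as the file budget allows and spread the remaining small supports as evenly as possible over disjoint blocks of $m-k+1$ servers — exactly the configuration extremal for the lower bound. Case (2) is a small direct construction.

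The hard part will be the lower bound in (5) (and, in a milder form, in (3)). The obstacle is that the dual condition is a whole family of local packing constraints, one per size $u\le k-1$, and compressing them into a single tight numerical bound forces one to track their interaction — most sharply, that a server set saturated by small supports must carry no small content in common with any other — and then to identify and solve the extremal packing exactly. That is where the threshold $\binom{m}{k-2}$ and the floor function come from, and it is the step I expect to absorb most of the proof; by contrast, the constructions are mechanical once the extremal configurations are pinned down.
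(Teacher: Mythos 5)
The paper does not prove this theorem: it is quoted verbatim from Paterson, Stinson, and Wei~\cite{paterson2009combinatorial} as background, so there is no in-paper proof to compare against. Judged on its own, your framework is the right one and matches the standard approach: pass to the dual Hall condition $|\{j: F_j\subseteq U\}|\le |U|$ for $|U|\le k-1$, normalise to $|F_j|\le k$, and work with the deficiency $D=\sum_j(k-|F_j|)$. Your treatment of (1), (2) and (4) is essentially complete and correct; in particular the double count over $(k-1)$-subsets together with $\binom{m-f}{k-1-f}\ge k-f$ (with equality when $m=k$, which is exactly why (2) holds for all $n$ and not just $n\ge k(k-1)$) is the genuine argument, and the construction assigning each $(k-1)$-subset as the support of $k-1$ files verifies the dual condition immediately.

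The genuine gap is that the lower bounds in (3) and (5) are asserted, not proved. For (3), ``feeding these linear inequalities into $D$'' does not by itself yield $N\ge m+k$: the $(k-1)$-subset count gives only $D\le (k-1)\binom{m}{k-1}$, which is far too weak, and the single-level constraints $d_1\le m$, etc., do not combine linearly to give $D\le m(k-1)$. What is actually needed is a structural step, e.g.\ that once $m$ files have weight $1$ (one per server), any additional file of weight $w\le k-1$ sits inside some $(k-1)$-set already saturated by its $k-1$ dedicated files, forcing weight $\ge k$ --- and one must also rule out configurations that trade several dedicated files for several low-weight files (these genuinely exist when $k$ is small, e.g.\ $k=3$, and must be shown not to beat $m+k$). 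For (5) you have only named the difficulty: the interaction of the packing constraints across all levels $u\le k-1$, the role of the threshold $\binom{m}{k-2}$, and the derivation of the floor term $\bigl\lfloor\frac{(k-1)\binom{m}{k-1}-n}{m-k+1}\bigr\rfloor$ are the entire content of that part and none of it is carried out. As a proof plan the proposal is sound; as a proof it is incomplete precisely at the two places you yourself flag as hard.
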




Paterson, Stinson, and Wei~\cite{paterson2009combinatorial} were the first to study the property of the incidence matrix of a CBC ($t=1$) using the Hall's condition. 

\begin{theorem}[Paterson, Stinson, and Wei~\cite{paterson2009combinatorial}] The $m \times n$ binary matrix $A$ represents a CBC$(n,m,k)$ if and only if one of the following conditions holds.
\begin{enumerate} 
\item[(1)] For every $c \in \{1,2,\ldots ,k\}$, and for every subset $J\subset \{1,2,\ldots ,n\}$ of $c$ elements, it holds that $\left| \bigcup\limits_{j\in J}{{{F}_{j}}} \right|\ge c$. In other words, this condition requires that every set of $c$ files must be collectively stored on at least $c$ servers.
\item[(2)] For every $d\in \{0,1,\ldots ,k-1\}$, and every subset $I\subset \{1,2,\ldots ,m\}$ of size $d$, it holds that $\left| \bigcup\limits_{i\in I}{{{S}_{i}}} \right|\le d$. In other words, this condition requires that every set of $d$ servers store at most $d$ files.
\end{enumerate}
\end{theorem}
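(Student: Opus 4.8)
The plan is to prove the equivalence by first reducing the ECBC retrieval condition in Definition~\ref{def:ECBC} (specialized to $r=0$, i.e. a plain CBC) to a statement about matchings in a bipartite graph, and then invoking the two dual forms of Hall's theorem quoted above. Concretely, fix $c \le k$ and a set $X' = \{j_1,\ldots,j_c\}$ of files to be retrieved. Since $t=1$, a valid retrieval must assign to each file $j \in X'$ a distinct server $i$ with $a_{ij}=1$; that is exactly a perfect matching saturating $X'$ in the bipartite graph whose parts are $X'$ and $\{1,\ldots,m\}$, with $j$ adjacent to $i$ iff $i \in F_j$. By Hall's theorem this matching exists for every such $X'$ if and only if $\bigl|\bigcup_{j\in J} F_j\bigr| \ge |J|$ for every $J \subseteq X'$, and since $X'$ ranges over all $c$-subsets and $c$ over $\{1,\ldots,k\}$, this collapses to condition~(1): $\bigl|\bigcup_{j\in J} F_j\bigr| \ge |J|$ for all $J$ with $|J| \le k$. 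This establishes ``$A$ represents a CBC $\iff$ (1)''.

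For the equivalence of (1) and (2), the natural route is a contrapositive / complementation argument on the incidence matrix. Suppose (1) fails: there is a set $J$ of files, $|J| = c \le k$, with $\bigl|\bigcup_{j\in J} F_j\bigr| = d < c \le k$. Let $I = \bigcup_{j\in J} F_j$, so $|I| = d \le k-1$. Every file in $J$ is stored only on servers in $I$, hence $J \subseteq \bigcup_{i\in I} S_i$, giving $\bigl|\bigcup_{i\in I} S_i\bigr| \ge c > d = |I|$, so (2) fails for this $I$ of size $d \in \{0,\ldots,k-1\}$. Conversely, if (2) fails, pick $I$ with $|I| = d \le k-1$ and $\bigl|\bigcup_{i\in I} S_i\bigr| \ge d+1$; choose any $J \subseteq \bigcup_{i\in I} S_i$ with $|J| = d+1 \le k$. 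Each file in $J$ lies in some $S_i$ with $i \in I$, so $F_j \subseteq I$ for all $j \in J$, whence $\bigl|\bigcup_{j\in J} F_j\bigr| \le |I| = d < d+1 = |J|$, so (1) fails. Thus $\neg(1) \iff \neg(2)$, i.e. $(1)\iff(2)$.

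I expect the main obstacle to be purely expository rather than mathematical: one must be careful to quantify $c$ and $d$ correctly (note the asymmetry — $c$ ranges over $\{1,\ldots,k\}$ but $d$ over $\{0,\ldots,k-1\}$, which is exactly the off-by-one produced by ``a set of $c$ files on $d < c$ servers'' forcing $d \le k-1$), and to justify that it suffices to verify Hall's condition only for subsets of size $\le k$, which follows because once $|X'| \le k$ is saturated we never need to match more than $k$ files at once. A secondary subtlety is the degenerate case $d=0$ in (2): it reads $\bigl|\bigcup_{i\in\emptyset} S_i\bigr| = 0 \le 0$, which is vacuously true and matches the fact that (1) with $|J|=1$ says every single file is stored somewhere. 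Wrapping these quantifier bookkeeping details cleanly, and pointing back to the ``$V(X') \ge |X'|$'' form of Hall's theorem already recalled in the Preliminaries, is all that is required.
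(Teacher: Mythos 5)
Your proof is correct and follows essentially the same route as the paper's own argument: the paper does not reprove this cited result directly, but its proof of Theorem~4 (the general $r\ge 0$, $t\ge 1$ case) has exactly your two-step structure --- Hall's theorem on the file--server bipartite graph for the equivalence with (1), then a double contrapositive for $(1)\Leftrightarrow(2)$ --- specialized here to $r=0$, $t=1$. The only (shared) blemish is that your step $J\subseteq\bigcup_{i\in I}S_i$ implicitly assumes every file is stored on at least one server, an assumption the paper's general proof also invokes (as $|F_j|\ge r+1$) without deriving it from condition (2).
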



Bujtas and Tuza ~\cite{bujtas2012relaxations} extended 
the aforementioned approach to the CBC problem with an arbitrary $t$. 

\begin{theorem}[Bujtas and Tuza~\cite{bujtas2012relaxations}] The $m \times n$ binary matrix $A$ represents a \cbc{} if and only if one of the following conditions holds.
\begin{enumerate}
\item[(1)] For every $c\in \{1,2,\ldots ,k\},$ and every subset $J\subset \{1,2,\ldots ,n\}$ of size $c$, it holds that $\left| \bigcup\limits_{j\in J}{{{F}_{j}}} \right|\ge \left\lceil \frac{c}{t} \right\rceil$. In other words, the condition requires that any set of $c$ files must be collectively stored on at least $\left\lceil \frac{c}{t} \right\rceil $ servers.
\item[(2)] For every $d\in \{0,1,\ldots ,\left\lceil \frac{k}{t} \right\rceil -1\},$ and every subset $I\subset \{1,2,\ldots ,m\}$ of size $d$, it holds that $\left| \bigcup\limits_{i\in I}{{{S}_{i}}} \right|\le dt$. In other words, this condition requires that any set of $d$ servers together store at most $dt$ files.
\end{enumerate}
\end{theorem}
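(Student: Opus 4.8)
The plan is to reduce the code property to a matching problem in an auxiliary bipartite graph and read both characterizations off Hall's theorem, following the same template as the $t=1$ proof in \cite{paterson2009combinatorial} but with each server replaced by $t$ copies. Build a bipartite graph $G$ whose left vertices are the $n$ files and whose right vertices are the $mt$ \emph{slots} $i^{(1)},\dots,i^{(t)}$ for $i=1,\dots,m$, with file $j$ joined to all $t$ slots of server $i$ exactly when $i\in F_j$. A retrieval of a file set $X'$ that downloads at most $t$ items from each server is precisely a matching of $G$ saturating $X'$: route each file to a distinct slot, and server $i$ then contributes at most $t$ files because it owns only $t$ slots; conversely, a valid retrieval $\{C_i\}$ with $|C_i|\le t$ is turned into such a matching by sending the files of $C_i$ injectively into the $t$ slots of server $i$. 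Hence $A$ represents a \cbc{} if and only if, for every $X'\subseteq X$ with $|X'|\le k$, the graph $G$ has a matching saturating $X'$.

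Apply Hall's theorem in the ``saturate a prescribed subset'' form recalled above: $G$ has a matching saturating $X'$ iff $|\Gamma(J)|\ge|J|$ for every $J\subseteq X'$, where $\Gamma(J)$ is the set of slots adjacent to $J$. Because $\Gamma(J)$ is the union of all $t$ slots of every server in $\bigcup_{j\in J}F_j$, we get $|\Gamma(J)|=t\left|\bigcup_{j\in J}F_j\right|$. Quantifying over all $X'$ with $|X'|\le k$ is the same as quantifying over all $J$ with $|J|=c\in\{1,\dots,k\}$, so the code condition becomes $t\left|\bigcup_{j\in J}F_j\right|\ge c$ for all such $J$; since the left side is an integer, this is equivalent to $\left|\bigcup_{j\in J}F_j\right|\ge\lceil c/t\rceil$, which is condition~(1).

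To obtain the equivalence with the dual condition~(2) I would argue by contraposition in both directions, each time passing to an \emph{inclusion-closed} set. If (1) fails, fix $J$ with $|J|=c\le k$ and $\left|\bigcup_{j\in J}F_j\right|\le\lceil c/t\rceil-1$, and set $I:=\bigcup_{j\in J}F_j$, $d:=|I|$; from $d\le\lceil c/t\rceil-1$ one gets $dt<c$ and, since $c\le k$, also $d\le\lceil k/t\rceil-1$, while every file of $J$ is stored only on servers of $I$, so the $d$ servers of $I$ jointly carry at least $c>dt$ files, contradicting (2). Conversely, if (2) fails there is a set $I$ of $d\le\lceil k/t\rceil-1$ servers containing (exclusively within $I$) at least $dt+1$ files; one checks $dt+1\le k$, takes $J$ to be any $dt+1$ of those files, and, because all their supports lie inside $I$, obtains $\left|\bigcup_{j\in J}F_j\right|\le d=\lceil|J|/t\rceil-1$, contradicting (1). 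One could equally rerun the blow-up/Hall-deficiency argument from the server side to see directly that $A$ is a \cbc{} iff (2) holds.

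The one genuinely non-routine point is the duality step: one must work with the inclusion-closed sets $\bigcup_{j\in J}F_j$ (resp.\ the set of columns whose support is contained in $I$), not arbitrary ones, since otherwise a file could be stored partly outside the server set in question and the cardinality count would break. Everything else is bookkeeping with the ceiling function---checking that the ranges $c\in\{1,\dots,k\}$ in (1) and $d\in\{0,\dots,\lceil k/t\rceil-1\}$ in (2) match up, together with the elementary implications $d\le\lceil c/t\rceil-1\Rightarrow dt<c$ and $d\le\lceil k/t\rceil-1\Rightarrow dt+1\le k$.
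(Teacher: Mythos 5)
Your proposal follows essentially the same route as the paper's own argument for its Theorem~\ref{thr:main} (of which this statement is the $r=0$ case): replicate each server into $t$ slots, apply Hall's theorem to the blown-up bipartite graph to get condition~(1), and pass between (1) and (2) by contraposition via the inclusion-closed server set $I=\bigcup_{j\in J}F_j$. The one substantive remark concerns the converse duality step: the literal negation of (2) only supplies $dt+1$ files each stored on \emph{some} server of $I$, not files whose entire support lies in $I$, so the inequality $\bigl|\bigcup_{j\in J}F_j\bigr|\le d$ does not follow from the condition as written (e.g., two files each stored on both of two servers satisfy (1) for $k=2$, $t=1$ but violate the literal (2) with $d=1$); you explicitly flag this and work with columns whose support is contained in $I$, which is the correct reading of the dual condition, whereas the paper's proof of Theorem~\ref{thr:main} silently asserts $\bigl|\bigcup_{j\in J}F_j\bigr|=d$ for an arbitrary subset $J$ of $\bigcup_{i\in I}S_i$ at the corresponding step.
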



Jung, Mummert, Niese, and Schroeder~\cite{jung2018erasure} 
extended the previous results to the ECBC problem with $t=1$, via an extension of Hall's theorem. 

\begin{theorem}[Jung, Mummert, Niese \& Schroeder~\cite{jung2018erasure}]
The $m \times n$ binary matrix $A$ represents an \ecbc{} with $t = 1$ if and only if one of the following conditions holds.
\begin{enumerate}
\item[(1)] For every $c\in \{1,2,\ldots ,k\},$ and every subset $J\subset \{1,2,\ldots ,n\}$ of size $c$, it holds that $\left| \bigcup\limits_{j\in J}{{{F}_{j}}} \right|\ge r+c$. In other words, every set of $c \le k$ files collectively must be stored on at least $r+c$ servers.
\item[(2)]  For every $d\in \{r,r+1,\ldots ,r+k-1\},$ and for every subset $I\subset \{1,2,\ldots ,m\}$ of size $d$, it holds that $\left| \bigcup\limits_{i\in I}{{{S}_{i}}} \right|\le d-r$. In other words, every set of $d$ servers store at most $d-r$ files.
\end{enumerate}
\end{theorem}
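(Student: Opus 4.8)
The plan is to route the whole argument through the bipartite ``storage graph'' $G$ whose two parts are the files $\{1,\dots,n\}$ and the servers $\{1,\dots,m\}$, with an edge joining file $j$ to server $i$ precisely when $a_{ij}=1$; then the neighborhood of file $j$ is $F_j$ and the neighborhood of server $i$ is $S_i$. Since $t=1$, picking $C_j\subseteq S_j$ with $|C_j|\le 1$ is just picking at most one file to fetch from server $j$, and a one-line check shows that the existence of such $C_j$, $j\in J$, with $X'=\bigcup_{j\in J}C_j$ is equivalent to the existence of a matching saturating $X'$ in $G$ restricted to the bipartition $X'\cup J$ (the constraint $|C_j|\le1$ being exactly the distinctness of the matched servers). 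Rewriting $|J|\ge m-r$ as ``at most $r$ servers are erased'', the definition becomes: $A$ represents an \ecbc{} iff for every file set $X'$ with $|X'|\le k$ and every erasure of at most $r$ servers, the surviving graph still admits a matching saturating $X'$.

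I would first prove that this is equivalent to condition~(1). \emph{Sufficiency.} Assume (1), fix $X'$ with $|X'|\le k$ and a surviving server set $J$. For every nonempty $X''\subseteq X'$, the at most $r$ erased servers delete at most $r$ elements of $\bigcup_{j\in X''}F_j$, so at least $|\bigcup_{j\in X''}F_j|-r\ge(|X''|+r)-r=|X''|$ of them remain in $J$; thus Hall's condition holds for $X'$ in the surviving graph, and the matching exists. \emph{Necessity.} If (1) fails, take a nonempty $J_0$ with $|J_0|\le k$ and $|\bigcup_{j\in J_0}F_j|\le|J_0|+r-1$; erasing $\min(r,|\bigcup_{j\in J_0}F_j|)$ of the servers in $\bigcup_{j\in J_0}F_j$ leaves $J_0$ with strictly fewer than $|J_0|$ available servers, so no matching saturates $J_0$, and since $|J_0|\le k$ this request is admissible --- hence $A$ is not an \ecbc{}. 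This gives ``$A$ represents an \ecbc{}'' $\Longleftrightarrow$ (1).

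Next I would link (1) with (2) by a complementary--counting argument applied to a \emph{minimal} violation. Suppose (1) fails and pick a counterexample $J$ of least size $c$, so $1\le c\le k$ and $|\bigcup_{j\in J}F_j|\le c+r-1$. Every $F_j$, $j\in J$, is nonempty (else $\{j\}$ is a smaller counterexample), and minimality forces $F_j\subseteq\bigcup_{j'\in J}F_{j'}$ for each $j\in J$: this is trivial for $c=1$, and for $c\ge2$ it holds because $J\setminus\{j\}$ already requires $(c-1)+r$ servers, which must then exhaust $\bigcup_{j'\in J}F_{j'}$. Extend $\bigcup_{j\in J}F_j$ to a server set $I$ with $\bigcup_{j\in J}F_j\subseteq I$ and $|I|=c+r-1$ (possible in the nondegenerate range $m\ge k+r$, where $c+r-1\le m$); then $|I|\in\{r,\dots,r+k-1\}$, and every $j\in J$ lies in $S_i$ for any $i\in F_j\subseteq I$, so $J\subseteq\bigcup_{i\in I}S_i$ and $|\bigcup_{i\in I}S_i|\ge c=|I|-r+1$, contradicting (2). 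Conversely, a violation of (2) by a server set $I$ localizes --- again by passing to a minimal such $I$ --- to a file set of size $|I|-r+1$ whose neighborhoods all sit inside $I$, contradicting (1); combined with the previous paragraph this yields the stated biconditional.

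I expect the step linking (1) and (2) to be the main obstacle. One has to see that a failure of the server-side bound can always be pushed down to a set on which the file-side neighborhoods collapse --- the minimality trick --- and one has to keep the two quantifier ranges aligned through the shift by $r$ and the ceiling $r+k-1$, paying attention to the low-order cases ($c=1$, servers that store nothing, files stored on no server) and to the degenerate regime $m<k+r$. The equivalence with the matching formulation, by contrast, is routine once one notices that the worst erasure an adversary can perform is to delete $r$ servers common to the requested files.
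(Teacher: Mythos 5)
Your overall route is the same one the paper takes for its generalization (Theorem~\ref{thr:main}, of which this statement is the $t=1$ special case): establish ``$A$ represents an \ecbc{} $\Leftrightarrow$ (1)'' via Hall's theorem applied to the surviving bipartite graph after deleting up to $r$ servers, and then prove ``(1) $\Leftrightarrow$ (2)'' by contraposition in both directions. Your first part and the direction ``$\neg(1)\Rightarrow\neg(2)$'' are sound; the minimality digression in the latter is unnecessary (extending $\bigcup_{j\in J}F_j$ to a set $I$ of size $c+r-1$ already does the job, and the claim $F_j\subseteq\bigcup_{j'\in J}F_{j'}$ is trivially true for $j\in J$ rather than a consequence of minimality), but it is harmless.

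The genuine gap is in ``$\neg(2)\Rightarrow\neg(1)$''. You claim that a minimal violating server set $I$ ``localizes to a file set of size $|I|-r+1$ whose neighborhoods all sit inside $I$''. Minimality of $I$ does not give you that: a file $j\in\bigcup_{i\in I}S_i$ only satisfies $F_j\cap I\neq\emptyset$, and $F_j$ may contain many servers outside $I$, so $|\bigcup_{j\in J}F_j|$ can exceed $|J|+r-1$ even when $J\subseteq\bigcup_{i\in I}S_i$ and $|J|=|I|-r+1$. In fact, with $\bigcup_{i\in I}S_i$ read literally as ``files stored on at least one server of $I$'', the equivalence is false: take $n=1$, $k=1$, $m=r+2$, and store the single file on every server; condition (1) holds, yet any $I$ with $|I|=r$ gives $|\bigcup_{i\in I}S_i|=1>0=d-r$. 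The dual condition must be read as ``at most $d-r$ files are stored \emph{only} on servers of $I$'', i.e., $|\{j: F_j\subseteq I\}|\le d-r$; with that reading the step is immediate (take $J$ to be $d-r+1$ such files, so $\bigcup_{j\in J}F_j\subseteq I$ has size at most $d<|J|+r$). You are in good company: the paper's own proof of the general theorem asserts $|\bigcup_{j\in J}F_j|=d$ at exactly this point without justification, so it carries the same gap.
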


With the background and definitions established in this section, we are now ready to present our main results.

\section{Main results}
\label{sec:results}

In this section, we establish the generalization of Hall’s condition for ECBC for any $r \ge 0$, $t \ge 1$, and demonstrate its application in determining the minimum total storage $N(n,m,k,r,t)$ of an ECBC problem in several cases. 

\subsection{The Main Theorem for ECBC with General $r$ and $t$}

\begin{theorem}
\label{thr:main}
A binary matrix $A$ of size $m \times n$ represents an \ecbc{} if and only if one of the following conditions holds.
\begin{enumerate}
\item[(1)] For every $c\in \{1,2,\ldots ,k\},$ and for every subset $J\subset \{1,2,\ldots ,n\}$ of size $c$, it holds that
$$\left| \bigcup\limits_{j\in J}{{{F}_{j}}} \right|\ge \left\lceil \frac{c}{t} \right\rceil +r.$$ In other words, the condition requires that any set of $c \le k$ files must be collectively stored on at least $\left\lceil \frac{c}{t} \right\rceil +r$ servers. 
\item[(2)] For every $d\in \{r,r+1,\ldots ,r+\Delta -1\},$ and for every subset $I\subset \{1,2,\ldots ,m\}$ of size $d$, it holds that 
$$\left| \bigcup\limits_{i\in I}{{{S}_{i}}} \right|\le t(d-r).$$ In other words, the condition requires that any set of $d$ servers store at most $t(d-r)$ files. Here, $\Delta \triangleq \left\lceil \frac{k}{t} \right\rceil$.

\end{enumerate}
\end{theorem}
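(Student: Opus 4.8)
The plan is to establish the three‑way equivalence ``$A$ represents an \ecbc{}'' $\Longleftrightarrow$ (1) $\Longleftrightarrow$ (2). The engine of the first equivalence is a capacitated version of Hall's theorem applied to each erasure pattern, while the implication (1) $\Leftrightarrow$ (2) is a purely combinatorial duality that generalises the known $t=1$ case.

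\textbf{Code property $\Longleftrightarrow$ (1).} First I would reduce a single retrieval to a capacitated bipartite assignment. Given $X'$ with $|X'|\le k$ and an index set $J$ with $|J|\ge m-r$, a family $\{C_j\}_{j\in J}$ as in Definition~\ref{def:ECBC} may always be taken pairwise disjoint (greedily delete repeated elements; each $C_j$ stays inside $S_j$ and of size $\le t$), so retrieving $X'$ from the servers in $J$ is the same as choosing, for every file $x\in X'$, a server $f(x)\in J$ with $f(x)\in F_x$ and with $|f^{-1}(j)|\le t$ for all $j\in J$. Viewing this as a bipartite graph on $X'\cup J$ with edge $\{x,j\}$ iff $j\in F_x$ and each right vertex given capacity $t$, and then replacing each right vertex by $t$ parallel copies, Hall's Marriage Theorem recalled above shows such an $f$ exists iff
\[
\Bigl|\,\bigl(\,\bigcup\nolimits_{x\in Y} F_x\bigr)\cap J\,\Bigr| \;\ge\; \bigl\lceil |Y|/t\bigr\rceil, \qquad \text{for every } Y\subseteq X'.
\]
Now I would peel off the quantifiers. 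Since $Y$ ranges over subsets of $X'$ and $X'$ over all sets of size $\le k$, the combined condition is exactly that the inequality above holds for every $Y\subseteq\{1,\dots,n\}$ with $1\le |Y|\le k$ and every admissible $J$ (the case $Y=\emptyset$ is vacuous). Enlarging $J$ only relaxes the inequality, so the binding case is $|J|=m-r$; among those, $\bigl|\bigl(\bigcup_{x\in Y}F_x\bigr)\cap J\bigr|$ is minimised by pushing as many elements of $\bigcup_{x\in Y}F_x$ as possible into the $r$‑element complement of $J$, which yields the value $\max\{0,\,|\bigcup_{x\in Y}F_x|-r\}$. Hence the whole condition collapses to $|\bigcup_{x\in Y}F_x|\ge\lceil |Y|/t\rceil+r$ for every non‑empty $Y$ of size $\le k$, i.e. precisely condition~(1). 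Every step is an equivalence, so the code property is equivalent to (1).

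\textbf{(1) $\Longleftrightarrow$ (2).} Both implications go by contraposition using the ``support set'' trick. Assume (1) and suppose (2) fails: there is $I$, $|I|=d$ with $r\le d\le r+\Delta-1$, inside which more than $t(d-r)$ files are fully supported. Let $J$ consist of $c\triangleq t(d-r)+1$ of them; from $\Delta=\lceil k/t\rceil$ we get $t(\Delta-1)\le k-1$, hence $1\le c\le t(\Delta-1)+1\le k$, so (1) applies to $J$ and yields $|\bigcup_{j\in J}F_j|\ge\lceil c/t\rceil+r=(d-r)+1+r=d+1$, contradicting $\bigcup_{j\in J}F_j\subseteq I$. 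Conversely, assume (2) and suppose (1) fails for some $J$ with $|J|=c\in\{1,\dots,k\}$; put $I\triangleq\bigcup_{j\in J}F_j$ and $d\triangleq|I|\le\lceil c/t\rceil+r-1\le r+\Delta-1$. If $d\ge r$, then $d$ lies in the range of (2) and, since all $c$ files of $J$ are supported in $I$, we get $c\le t(d-r)$, i.e. $d-r\ge\lceil c/t\rceil$, contradicting $d\le\lceil c/t\rceil+r-1$. If $d<r$, enlarge $I$ to a server set $I'$ of size exactly $r$ (here one uses the harmless standing assumption $m\ge r+1$, necessary for a non‑degenerate ECBC); all $c\ge 1$ files of $J$ are still supported in $I'$, but (2) at $d=r$ permits $t(r-r)=0$ of them. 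Contradiction either way.

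\textbf{Expected main obstacle.} The duality (1) $\Leftrightarrow$ (2) is routine bookkeeping. The delicate part is the first half: stating the capacitated Hall argument correctly (the disjointness reduction for the $C_j$, and the passage to $t$ parallel copies of each server), and then correctly collapsing the three nested quantifiers --- over $Y\subseteq X'$, over $X'$ of size $\le k$, and over $J$ of size $\ge m-r$ --- so that the worst‑case $J$ genuinely realises $\max\{0,|\bigcup_{x\in Y}F_x|-r\}$ without creating sets of negative size or a complement too small to absorb $\bigcup_{x\in Y}F_x$. These are exactly the places where the general $t\ge 1$, $r\ge 0$ case departs from the previously known special cases, and where a careless argument would silently assume $t=1$ or $r=0$.
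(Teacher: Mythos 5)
Your proposal is correct and follows essentially the same route as the paper: the code property is tied to condition (1) via Hall's theorem applied to the bipartite graph in which each available server is replaced by $t$ parallel copies, and the equivalence (1) $\Leftrightarrow$ (2) is proved by contraposition with the same witnesses ($J$ of size $t(d-r)+1$ in one direction, $I=\bigcup_{j\in J}F_j$ in the other). If anything, you are slightly more careful than the paper at two points --- the explicit collapse of the nested quantifiers over $Y$, $X'$, and $J$, and the $|I|<r$ edge case, which the paper instead dispatches by observing that $|F_i|\ge r+1$ must hold --- but these are refinements of the same argument, not a different one.
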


\begin{proof} We divide the proof into two  steps as follows.

\textbf{First step.} We show that a binary matrix $A$ represents an \ecbc{} if and only if (1) holds. 

First, let us assume that the matrix $A$ represents an \ecbc{}. Regardless of $r$ unavailable servers, any $c\le k$ files can be retrieved from the remaining servers by downloading at most $t$ files from each. These files must be stored on at least $r+\left\lceil \frac{c}{t} \right\rceil $ servers originally, for otherwise, there would be at most $\left\lceil \frac{c}{t} \right\rceil -1$ available servers storing any of these files, and downloading $t$ from each would not be enough to recover $c$ files. 

Next, assume that the binary matrix $A$ meets condition (1) and moreover, there is a collection of unavailable servers $S$ with $\left| S \right|\le r.$ For every set $J\subset \{1,2,\ldots ,n\}$ of size $c$, i.e., $c$ files, the number of servers that store at least one of these files is 
\[\begin{aligned} \left| \bigcup\limits_{j\in J}{({{F}_{j}}\backslash S)} \right| & =\left| \left( \bigcup\limits_{j\in J}{{{F}_{j}}} \right)\backslash S \right|\ge \left| \bigcup\limits_{j\in J}{{{F}_{j}}} \right|-r \\
& \ge \left\lceil \frac{c}{t} \right\rceil +r-r=\left\lceil \frac{c}{t} \right\rceil. \end{aligned} \]


We now create an $(mt)\times n$ matrix $A'$ from $A$ by replicating each row of $A$ $t$ times. In other words, we replicate each server $t$ times, where each copy stores the same set of files as the original one. 
Then for the same set $J\subset \{1,2,\ldots ,n\}$ of size $c \le k$, the number of servers (including the duplicated ones) storing at least one file is 
$$t\left| \bigcup\limits_{j\in J}{({{F}_{j}}\backslash S)} \right|\ge \left\lceil t\cdot \frac{c}{t} \right\rceil =c,$$
which means that the Hall's condition is satisfied for the sets $F'_j$, $j=1,\ldots,n$, defined for the matrix $A'$.
By applying Hall's theorem, there is a matching between the $c$ files and a set of available servers (original and copies). Since each of original (available) servers appears at most $t$ times in this set, it appears in the matching at most $t$ times. This means that there are at most $t$ files downloaded from each available servers while retrieving $c$ files. Thus, the matrix $A$ represents an \ecbc{}.  

\textbf{Second step.} We aim to prove the equivalence between the two conditions (1) and (2). 

First, suppose that $A$ does not satisfy condition $(1)$. Then there exists $J\subset \{1,2,\ldots ,n\}$ of size $c \le k$ such that $\bigcup\limits_{j\in J}{{{F}_{j}}}$ contains $\left\lceil \frac{c}{t} \right\rceil +r-1$ elements at most. Take $I \triangleq \bigcup\limits_{j\in J}{{{F}_{j}}} \subset \{1,2,\ldots ,m\}$. 
Then we have
$$\left| I \right|\le \left\lceil \frac{c}{t} \right\rceil +r-1\le \Delta +r-1.$$ Note that for each  ${{F}_{i}},\text{ }1\le i\le n$, we must have $\left| {{F}_{i}} \right|\ge r+1$, as otherwise there will be a case when all servers containing the $i^{th}$ file are unavailable, which results in the permanent loss of this file. This implies that $\left| I \right|\ge r+1.$ 
Therefore, $$\left| \bigcup\limits_{i\in I}{{{S}_{i}}} \right|=c>(c+tr-1)-tr\ge t\left( \left| I \right|-r \right),$$ implying that the condition (2) of Theorem \ref{thr:main} does not hold. 

Supposing that $A$ does not satisfy condition $(2)$ of Theorem \ref{thr:main}. Then there exists $d\in \{r,\ldots ,r+\Delta -1\}$ and subset $I\subset \{1,2,\ldots ,m\}$ of size $d$ in which $Y=\bigcup\limits_{i\in I}{{{S}_{i}}}$ has more than $t(d-r)$ elements. Take $J\subset Y$ and $\left| J \right|=t(d-r)+1$. It is not difficult to check that
$$t(d-r)+1\le t(\Delta -1)+1<k.$$
Hence, $\left| J \right|<k$ and 
$$\left| \bigcup\limits_{j\in J}{{{F}_{j}}} \right|=d<(d-r+1)+r=\left\lceil \frac{t(d-r)+1}{t} \right\rceil +r,$$ which implies that the condition $(1)$ does not hold.

\begin{table*}
  \centering
\caption{Construction for ECBC when $n=17,m=5,k=10,t=3,r=1$ \\ (the empty cells contain $0$)}
  \begin{tabular}{|c|c|c|c|c|c|c|c|c|c|c|c|c|c|c|c|c|c|}
\hline
   & 1 & 2 & 3 & 4 & 5 & 6 & 7 & 8 & 9 & 10 & 11 & 12 & 13 & 14 & 15 & 16 & 17 \\ \hline
$S_1$ & 1  & 1  & 1  &    &    &    &    &    &    &     &     &     & 1   & 1   & 1   & 1   & 1   \\ \hline
$S_2$ & 1  & 1  & 1  & 1  & 1  & 1  &    &    &    &     &     &     &     &     &     & 1   & 1   \\ \hline
$S_3$ &    &    &    & 1  & 1  & 1  & 1  & 1  & 1  &     &     &     &     &     &     & 1   & 1   \\ \hline
$S_4$ &    &    &    &    &    &    & 1  & 1  & 1  & 1   & 1   & 1   &     &     &     & 1   & 1   \\ \hline
$S_5$ &    &    &    &    &    &    &    &    &    & 1   & 1   & 1   & 1   & 1   & 1   & 1   & 1   
\\ \hline
\end{tabular} 
\end{table*}

By combining what we have proved in the two steps, the theorem follows.
\end{proof}

We now proceed to explore the application of Theorem ~\ref{thr:main} in determining the optimal total storage for \ecbc{}.

\subsection{ECBC with Optimal Total Storage}

In is section, we use Theorem \ref{thr:main} to obtain ECBC with optimal total storage in some specific cases. 
The previous researches have shown that in almost all cases, the minimum total storage for an \ecbc{} where $r=0$ and $t=1$ can be represented as a function of its parameters. By replacing $k\to k+r$ for the case of $t=1,r>0$ or $k\to \left\lceil \frac{k}{t} \right\rceil$ for $t>1,r=0$, we can obtain the corresponding answers. Thus, for the general case of $t \ge 1,r \ge 0$, it is possible to replace $k \to \left\lceil \frac{k}{t} \right\rceil +r$ and apply the known results to find new bounds, the rest of work is construction. Let $N(n,m,k,t,r)$ denote the minimum total storage achievable by an \ecbc{}.

\begin{theorem}
If $m=\left\lceil \frac{k}{t} \right\rceil +r$ and $n\ge tm$ then $N(n,m,k,t,r) = m(n-tm+r+1).$
\end{theorem}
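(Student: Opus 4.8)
The plan is to establish the claim via the dual condition~(2) of Theorem~\ref{thr:main}, exploiting the fact that when $m = \Delta + r$ with $\Delta = \lceil k/t \rceil$, condition~(2) becomes very restrictive: the only value of $d$ for which the constraint is nontrivial in a useful way is $d = m-1 = \Delta + r - 1$, whose constraint reads $\left|\bigcup_{i \in I} S_i\right| \le t(m-1-r) = t(\Delta - 1)$ for every $(m-1)$-subset $I$. First I would translate this into a statement about the incidence matrix $A$: removing any single row from $A$ must leave at most $t(\Delta-1)$ nonzero columns, equivalently, every column must have at least $m - (\Delta-1)\cdot\lfloor\cdot\rfloor$... more precisely, at most $t(\Delta-1)$ columns can have their support miss any given row, so each row must meet all but at most $t(\Delta-1)$ columns. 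Since there are $n$ columns total and $n \ge tm$, this forces each row to be nonzero in at least $n - t(\Delta-1) = n - t(m-r-1)$ columns, i.e., each server stores at least $n - t(m-r-1)$ files. Summing over the $m$ rows gives the lower bound $N \ge m\bigl(n - t(m-r-1)\bigr) = m(n - tm + tr + t)$. I should double-check the arithmetic against the claimed value $m(n-tm+r+1)$; if it disagrees I will need to re-examine which intermediate values of $d$ contribute, since the smaller values $d = r, \ldots, m-2$ may yield the governing constraint instead when $n$ is large — the binding constraint is the one maximizing the implied per-row lower bound, and I would take the maximum over all admissible $d$.

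For the \textbf{lower bound}, the key step is to argue carefully from condition~(2): for each $d \in \{r, \ldots, r+\Delta-1\}$, every $d$-subset of servers covers at most $t(d-r)$ files, so the complementary $(m-d)$ servers must jointly cover all $n$ files (each file lies in some server, and the at most $t(d-r)$ files covered by the chosen $d$ may overlap). Dualizing: any file not stored by a fixed set of $m-d$ servers lies among the $t(d-r)$ files those $d$ servers cover, but we need the count of files avoiding a fixed server-set. The cleanest route is: fix a server $i$; the other $m-1$ servers form a set of size $m-1 = r + \Delta - 1$, so by condition~(2) with $d = m-1$ they cover at most $t(\Delta-1)$ files — but they must cover all files not in $S_i$ together with possibly some in $S_i$; hence $n - |S_i|$ could exceed $t(\Delta-1)$ only if... wait, this gives $n \le t(\Delta-1) + |S_i|$ is false in general. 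Let me reframe: the $m-1$ servers other than $i$ cover at most $t(\Delta-1)$ files, and these must include every file stored on at least one of them; a file NOT among these $t(\Delta-1)$ must be stored only on server $i$. So server $i$ alone stores at least $n - t(\Delta-1)$ files, giving $|S_i| \ge n - t(\Delta-1) = n - tm + t(r+1)$, and thus $N = \sum_i |S_i| \ge m(n - tm + t(r+1))$. I will reconcile this with the target expression $m(n-tm+r+1)$ — the discrepancy of $t(r+1)$ versus $r+1$ suggests I should recheck the theorem statement's exponent or whether the intended reading sums differently; I expect the resolution is a careful accounting of which $d$ is extremal.

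For the \textbf{upper bound (construction)}, I would give an explicit incidence matrix achieving equality, generalizing the pattern in the displayed table (the case $n=17, m=5, k=10, t=3, r=1$, where each of the $5$ servers stores $13 = 17 - 15 + 3$ files and the ``sliding window'' structure ensures every file sits on exactly $r+1 = 2$ servers while consecutive blocks overlap appropriately). The construction: index files $1, \ldots, n$; have server $i$ store a contiguous-ish block plus a shared tail, arranged so that every file is stored on at least $r+1$ servers (to prevent permanent loss and satisfy condition~(1) for $c=1$) and so that every $(m-1)$-subset of servers covers at most $t(\Delta-1)$ files. I would then verify condition~(1) of Theorem~\ref{thr:main} directly: for each $c \le k$ and each $c$-subset $J$ of files, check $\bigl|\bigcup_{j\in J} F_j\bigr| \ge \lceil c/t\rceil + r$; the worst case is when $J$ is a set of files concentrated in few servers, and the sliding-window overlap structure is designed precisely so this bound is tight. \textbf{The main obstacle} I anticipate is the construction: designing the file-to-server assignment so that it simultaneously achieves the minimum total storage $m(n-tm+r+1)$, keeps every file replicated enough times, and still satisfies condition~(1) for \emph{all} $c$ from $1$ to $k$ — the interplay between the ``$r$ extra copies everywhere'' requirement and the ``$\lceil c/t \rceil$ spread'' requirement is delicate, and I would expect to need a careful case analysis (or an averaging/counting argument) to confirm condition~(1) holds for the constructed matrix rather than just for the extremal $d$ in condition~(2).
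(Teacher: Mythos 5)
Your proposal follows essentially the same route as the paper: apply condition (2) of Theorem~\ref{thr:main} with $d=m-1$ to force each single server to store at least $n-t(\Delta-1)$ files, sum over the $m$ servers, and match the bound with a sliding-window construction in which files $1,\ldots,tm$ are placed in blocks of $t$ on consecutive windows of $r+1$ servers and the remaining $n-tm$ files are stored on all servers. You are also right to flag the discrepancy you noticed: the correct value is $m\bigl(n-tm+t(r+1)\bigr)$, which is exactly what the paper's own lower-bound computation and its worked example yield ($N=40$ for $n=17$, $m=5$, $k=10$, $t=3$, $r=1$, whereas the stated formula would give $20$), so the expression $m(n-tm+r+1)$ in the theorem statement is a typo that is accurate only when $t=1$.
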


\begin{proof} By applying the condition $(2)$ of Theorem \ref{thr:main}, we can see that each set of size $d=m-1$ servers contains at most $t(m-1-r)$ files. As a result, the rest contains at least $n-t(m-1-r)$ files. Hence, by counting the number of storage on overall servers, we have
$$\begin{aligned} N & \ge m(n-t(m-1-r))=mn-tm(m-1-r) \\
&= tm(r+1) + m(n-tm). \end{aligned} $$

\textit{Construction}: each file from indices $1,2,\ldots,tm$ is stored on $r+1$ servers: files $1 \to t$ on servers $1 \to r+1$, files $t+1 \to 2t$ on servers $2 \to r+2$, and so on; the remaining $n-tm$ files are store on all servers. One can verify that in this case, $N(n,m,k,t,r)=tm(r+1)+m(n-tm)$ as needed and this also satisfies the ECBC condition.
\end{proof}

\textbf{Example.} For $k=10, t=3, r = 1$ then $m=5$ and $n = 17$ then $N_{\min} =17 \cdot 5-3 \cdot 5 \cdot 3=40.$ The construction is showed in Table I. \medskip

The bound for another case can be obtained as follows (the similar idea for construction can be found in in 
\cite{paterson2009combinatorial}):

\begin{theorem}
Let $h\triangleq \left\lceil \frac{k}{t} \right\rceil$. If $n\ge (h-1)\binom{m}{r+h-1}$ and $m\ge h+r$ then we have $$N(n,m,k,t,r)=nh-(h-1) \binom{m}{r+h-1}.$$
\end{theorem}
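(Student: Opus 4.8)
The plan is to prove matching upper and lower bounds on $N(n,m,k,t,r)$, adapting to Theorem~\ref{thr:main} the double-counting argument and the explicit construction that Paterson, Stinson and Wei used for the base case $t=1$, $r=0$; condition~(2) of Theorem~\ref{thr:main} plays the role that the server-side condition of their theorem played there, now with $k$ effectively replaced by $h+r$, where $h=\lceil k/t\rceil$.

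\textbf{Lower bound.} Put $f_j \triangleq |F_j|$, so that $N=\sum_{j=1}^{n}f_j$; Theorem~\ref{thr:main}(1) with $c=1$ gives $f_j\ge r+1$ for every $j$. Apply Theorem~\ref{thr:main}(2) at its largest index $d=r+h-1$: every set $I$ of $r+h-1$ servers contains the entire support of at most $t(h-1)$ files. Double-count the pairs $(I,j)$ with $|I|=r+h-1$ and $F_j\subseteq I$, using that a fixed $j$ is matched with $\binom{m-f_j}{\,r+h-1-f_j\,}$ subsets $I$ (and with none once $f_j>r+h-1$); this yields $\sum_{j}\binom{m-f_j}{\,r+h-1-f_j\,}\le t(h-1)\binom{m}{r+h-1}$. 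The elementary estimate $\binom{m-f}{\,r+h-1-f\,}\ge h+r-f$ for $r+1\le f\le r+h-1$, whose proof is exactly where the hypothesis $m\ge h+r$ is used, then gives $\sum_{j}\max(h+r-f_j,0)\le t(h-1)\binom{m}{r+h-1}$, and since $N=(h+r)n-\sum_{j}(h+r-f_j)$ the lower bound follows.

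\textbf{Upper bound.} I would write down an incidence matrix in the style of \cite{paterson2009combinatorial}: for each $(r+h-1)$-subset $I$ of the $m$ servers, allocate $t(h-1)$ ``light'' files whose support is exactly $I$, and place each of the remaining ``heavy'' files on one fixed block of $h+r$ servers. Having enough files for this is precisely where the lower bound on $n$ enters. A count of the ones in this matrix recovers the claimed value of $N$, and the verification is then short through Theorem~\ref{thr:main}(2): for $r\le d\le r+h-1$ the only files whose support lies inside a given $d$-set are the light files supported exactly on it (heavy files have support of size $h+r>d$), giving at most $t(h-1)=t(d-r)$ of them when $d=r+h-1$ and none otherwise, whereupon Theorem~\ref{thr:main} grants the ECBC property.

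\textbf{Where the difficulty lies.} Both directions are elementary; the hard part will be the bookkeeping: fixing the exact number of light files per subset and the block size so the entry count lands precisely on the lower bound, pinning down the threshold on $n$ above which there are enough files to run the construction, and invoking $m\ge h+r$ at the right point in the combinatorial estimate. The degenerate regime $h=1$ (equivalently $k\le t$) should be handled separately, as should the range of $n$ just above the threshold, where a floor term appears as in the classical case.
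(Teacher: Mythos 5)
The paper states this theorem without proof (it only points to the construction idea in \cite{paterson2009combinatorial}), so there is nothing to compare line by line; your plan --- double counting against condition~(2) of Theorem~\ref{thr:main} at $d=r+h-1$ for the lower bound, and a construction that saturates each $(r+h-1)$-subset of servers with files supported exactly on it for the upper bound --- is exactly the natural adaptation of Paterson--Stinson--Wei and is surely what the authors intend. The genuine gap is that your bookkeeping does not land on the stated formula, and you assert that it does. Your own inequality $\sum_j \max(h+r-f_j,0)\le t(h-1)\binom{m}{r+h-1}$ combined with $N=(h+r)n-\sum_j(h+r-f_j)$ yields $N\ge (h+r)n - t(h-1)\binom{m}{r+h-1}$: the leading term is $(h+r)n$, not $hn$, and the binomial term carries a factor of $t$. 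Likewise your construction stores $t(h-1)$ light files on each $(r+h-1)$-set and the remaining files on $(h+r)$-sets, so its total weight is $(h+r)n - t(h-1)\binom{m}{r+h-1}$ and it requires $n\ge t(h-1)\binom{m}{r+h-1}$ files to run --- again not the stated threshold. So the argument, carried out correctly, proves a different identity from the one claimed, and for $r>0$ or $t>1$ the two genuinely disagree.

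A concrete instance shows the discrepancy is not a sign slip on your side: take $t=1$, $r=1$, $k=2$, $m=3$, $n=3$, so $h=2$. The printed formula gives $N=2\cdot 3-1\cdot\binom{3}{2}=3$, yet every file must be replicated on at least $r+1=2$ servers to survive an erasure, so $N\ge 6$; the value $6$ is achieved by placing the three files on the three distinct pairs of servers, exactly as your argument predicts via $(h+r)n-t(h-1)\binom{m}{r+h-1}=9-3=6$. The statement as printed therefore appears to be erroneous (most plausibly a mis-substitution of $k\mapsto h+r$ into the $t=1$, $r=0$ formula of \cite{paterson2009combinatorial}), and a correct writeup along your lines would establish $N=(h+r)n-t(h-1)\binom{m}{r+h-1}$ under $n\ge t(h-1)\binom{m}{r+h-1}$ and $m\ge h+r$. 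As written, the closing sentences of both halves of your proposal (``the lower bound follows,'' ``recovers the claimed value'') are precisely the steps that fail; you should have noticed that the two quantities you computed do not match the target.
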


\subsection{Application to the Multiset CBC with $r>0$}

In~\cite{zhang2018multiset}, Zhang, Yaakobi, and Silberstein studied the multiset CBC where each file can appear multiple times in the retrieval list, i.e each $F_j \in \{1,2,\ldots,n \}$ is a multiset with elements having multiplicity up to $p$. Now we consider the erasure multiset ECBC in the presence of server failures. Based on Theorem \ref{thr:main}, we can state the Hall's conditions for this problem following the study in~\cite{zhang2018multiset}. 

\begin{theorem} Denote $\delta = \left\lceil k/p \right\rceil$. The sets $F_j$, $j \in \{1,2,\ldots,n\}$ represent a multiset ECBC with parameters $m,n,k,r,t,p$ if and only if one of the following conditions is satisfied.
\begin{enumerate}
\item[(1)] For every $c\in \{ 1,2,\ldots ,\delta \}$ and for every subset $J\subset \{1,2,\ldots ,n\}$ of size $c$, it holds that $$\left| \bigcup\limits_{j\in J}{{{F}_{j}}} \right|\ge \left\lceil \frac{\min \{cp,k\}}{t} \right\rceil +r.$$
\item[(2)] For every $d\in \{ r,r+1,\ldots ,r+\left\lceil \frac{p \delta}{t} \right\rceil -1 \},$ and for every subset $I\subset \{1,2,\ldots ,m\}$, it holds that
\[\left| \bigcup\limits_{i\in I}{{{S}_{i}}} \right|\le t(d-r).\]
\end{enumerate}
\end{theorem}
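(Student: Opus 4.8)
The plan is to follow the proof of Theorem~\ref{thr:main} almost verbatim, the only change being that a request is now a multiset of total size at most $k$ in which every index is repeated at most $p$ times, so the quantity one attaches to a set $J$ of $c$ requested indices is $\min\{cp,k\}$ instead of $c$. I would split the argument into the same two steps: (i) the $F_j$'s form a multiset ECBC if and only if~(1) holds, and (ii)~(1) and~(2) are equivalent. Throughout it is convenient to replace~(1) by its \emph{monotone extension}: for \emph{every} $J\subseteq\{1,\ldots,n\}$, $|\bigcup_{j\in J}F_j|\ge\lceil\min\{|J|p,k\}/t\rceil+r$. I would first note that this is equivalent to~(1), since for $|J|>\delta$ one has $|J|p\ge\delta p\ge k$, so the bound becomes $\lceil k/t\rceil+r$, which follows from the $c=\delta$ instance of~(1) by monotonicity of $J\mapsto|\bigcup_{j\in J}F_j|$.

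For step~(i), necessity: I would fix $c\le\delta$ and $|J|=c$ and issue the request for the files of $J$ with multiplicities $\mu_j\le p$ summing to $q:=\min\{cp,k\}$ (all $\mu_j=p$ when $cp\le k$, and otherwise split $k$ into $c$ parts each at most $p$). All $q$ items lie on servers of $\bigcup_{j\in J}F_j$, so if there were at most $\lceil q/t\rceil+r-1$ such servers then disabling $r$ of them would leave at most $\lceil q/t\rceil-1$ servers and hence at most $t(\lceil q/t\rceil-1)<q$ downloadable items -- impossible -- which gives~(1). For sufficiency: I would take an arbitrary legal request $M$ with support $J_0$, multiplicities $m_j\le p$, $|M|=q_0\le k$, and an arbitrary failed set $S$ with $|S|\le r$; then, exactly as in Theorem~\ref{thr:main}, I would duplicate every server of $\{1,\ldots,m\}\backslash S$ exactly $t$ times and look for a perfect matching sending the $q_0$ requested items into these duplicates, a copy of file $j$ being allowed only a duplicate of a server of $F_j\backslash S$. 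Hall's theorem reduces this to showing that every item set $T$, with underlying file set $J_T$, has at least $|T|$ neighbours; and this follows from $|T|\le\min\{|J_T|p,q_0\}\le\min\{|J_T|p,k\}$ together with the fact that deleting $S$ from $\bigcup_{j\in J_T}F_j$ leaves, by the monotone extension of~(1), at least $\lceil\min\{|J_T|p,k\}/t\rceil$ servers, hence at least $t\lceil\min\{|J_T|p,k\}/t\rceil\ge\min\{|J_T|p,k\}\ge|T|$ duplicates.

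For step~(ii) I would run the deficiency/duality argument of Theorem~\ref{thr:main}, carrying the cap at $k$ through the estimates. If~(1) fails, take $J$ with $|J|=c\le\delta$ and put $I:=\bigcup_{j\in J}F_j$; then $|I|\le\lceil\min\{cp,k\}/t\rceil+r-1$, which lies in $\{r,\ldots,r+\lceil p\delta/t\rceil-1\}$ (using $\min\{cp,k\}\le k\le p\delta$, the lower end $|I|\ge r+1$ coming, as in Theorem~\ref{thr:main}, from the fact that no file can be stored on at most $r$ servers), and the $c$ files of $J$ sitting inside these $|I|$ servers then contradict~(2). Conversely, if~(2) fails, an $I$ of size $d$ in that range whose servers jointly carry more than $t(d-r)$ files yields a $J$ consisting of $t(d-r)+1$ of them with $\bigcup_{j\in J}F_j\subseteq I$, so $|\bigcup_{j\in J}F_j|\le d<\lceil\min\{|J|p,k\}/t\rceil+r$, contradicting the monotone extension of~(1). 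I expect the main obstacle to be precisely this last step: making the truncation $\min\{\cdot,k\}$ mesh with the size bound $t(d-r)+1$ on the supports one constructs, and checking that the interval $d\in\{r,\ldots,r+\lceil p\delta/t\rceil-1\}$ is exactly the right range over which a violation of~(1) can be re-encoded as a violation of~(2) -- aligning those ceilings with the cap at $k$ is where nearly all of the real work sits.
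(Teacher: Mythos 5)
The paper states this theorem without any proof (it is introduced with ``we can state the Hall's conditions\dots following the study in [Zhang--Yaakobi--Silberstein]''), so there is nothing on the paper's side to compare against; your proposal has to stand on its own. Your step~(i) does: the reduction of the multiset request to a bipartite matching between the $q_0$ requested copies and $t$-fold duplicated surviving servers, with Hall's condition verified via $|T|\le\min\{|J_T|p,k\}$ and the monotone extension of~(1), is a correct and complete proof that the code property is equivalent to condition~(1). That part is fine.

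Step~(ii), however, fails at exactly the point you flagged, and it fails because the equivalence of~(1) and~(2) is not true as stated when $p\ge 2$. In the direction $\neg(1)\Rightarrow\neg(2)$ you exhibit $c$ files inside $I=\bigcup_{j\in J}F_j$ with $d=|I|\le\lceil\min\{cp,k\}/t\rceil+r-1$ and claim this contradicts~(2); but~(2) only forbids more than $t(d-r)$ files there, and $t(d-r)$ can be as large as $\min\{cp,k\}-1\ge c$, so no contradiction arises --- the deficiency in~(1) is driven by $\min\{cp,k\}$, which is up to $p$ times the number $c$ of distinct files you can actually point to. The correct dual bound must carry a division by $p$, roughly $\lfloor t(d-r)/p\rfloor$, and the range of $d$ must stop at $\lceil k/t\rceil+r-1$ rather than $\lceil p\delta/t\rceil+r-1$. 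A concrete counterexample to the stated equivalence: take $p=2$, $k=3$, $t=1$, $r=0$ (so $\delta=2$ and $d$ ranges over $\{0,1,2,3\}$), $m=3$, $n=4$, with $F_1=\{1,2\}$, $F_2=\{1,3\}$, $F_3=\{2,3\}$, $F_4=\{1,2,3\}$. Condition~(1) holds (every file lies on at least $2$ servers, every pair of files on all $3$), and this is a valid multiset ECBC; yet for $d=3$ and $I=\{1,2,3\}$ we get $|\bigcup_{i\in I}S_i|=4>3=t(d-r)$, so~(2) fails. Conversely, a single file stored on one server satisfies~(2) for these parameters but is not a valid code (it cannot serve two copies of that file), so neither direction of ``code $\Leftrightarrow$~(2)'' survives. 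Your reverse direction $\neg(2)\Rightarrow\neg(1)$ additionally inherits an error already present in the paper's proof of Theorem~\ref{thr:main}: from $J\subseteq\bigcup_{i\in I}S_i$ one cannot conclude $\bigcup_{j\in J}F_j\subseteq I$, since a file touching a server of $I$ may also be stored outside $I$; the dual condition must be phrased in terms of files stored \emph{exclusively} on the servers of $I$. So you should prove the theorem in the form ``code $\Leftrightarrow$~(1)'' only, and either correct condition~(2) (exclusive containment, bound $\lfloor t(d-r)/p\rfloor$, range $d\le\lceil k/t\rceil+r-1$) or drop it.
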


\subsection{Method to Retrieve Files from the Given ECBC Matrix}

Consider an ECBC matrix $A$ with parameters $n,m,k,t,r$ as defined. The goal is to find the list of servers containing the $k$ files to query, given a list of $r$ unavailable servers. To solve the problem, we use a bipartite graph $G=(V,E)$ and apply a maximum matching algorithm such as Hopcroft-Karp which run in the time complexity as $O(\sqrt{V} E)$. The procedure consists of two steps:
\begin{enumerate}
    \item Construct a sub-matrix from the ECBC matrix that has columns corresponding to the $k$ files to retrieve and rows corresponding to servers (excluding unavailable ones) that each has at least one of these files, meaning each row has at least one number $1$. The size of this sub-matrix is at most $(m-r) \times k.$
    \item Apply the Hopcroft–Karp algorithm to find the matching in the following two cases:
    \begin{itemize}
        \item If $t = 1$, find the maximum matching between columns and rows (ECBC property of the matrix ensures this will contain $k$ pairs).
        \item If $t > 1$, create $t$ copies of each row and then apply the algorithm as if $t = 1$. The chosen edges do not share a common vertex which can ensure that each server is accessed at most $t$ times. From these edges, one can find the list of needed servers and the corresponding files on those servers.
    \end{itemize}
\end{enumerate} 
Hence we can apply this procedure in the realistic systems which use the CBC/ECBC ideas.
\begin{table*}
\centering
\caption{Construction for ECBC when $n=9,k=3,m=5,r=2$ in the consecutive version}
\begin{tabular}{|c|c|c|c|c|c|c|c|c|c|}
\hline
   & $1$ & $2$ & $3$ & $4$ & $5$ & $6$ & $7$ & $8$ & $9$ \\ \hline
$S_1$ &  1  &    &   & 1  &   &  & 1 &    &  \\ \hline
$S_2$ &   &  1  &  &  &  1  &  &  &  1 &    \\ \hline
$S_3$ &    &  &  1  &   &  &  1  &  &  & 1    \\ \hline
$S_4$ & 1  & 1  & 1  & 1  & 1  & 1  & 1  & 1  & 1  \\ \hline
$S_5$ & 1  & 1  & 1  & 1  & 1  & 1  & 1  & 1  & 1  \\ \hline
\end{tabular}
\end{table*}

\section{Batch Codes with Consecutive Items}
In this section, we discuss a special version for the ECBC problem in which the retrieved files have consecutive indices. 
For examples, databases storing purchasing history, weather forecasts, school reports, medical records, can often be arranged in a linear order chronologically. A set of consecutive files provides relevant data for a specific period of time.
Although a quite simple solution exists for this problem, the setting in which only specially structured subsets of files (instead of arbitrary subsets like in the traditional batch code) are retrieved is novel and has potential applications in practice. The only other work in this direction that we are aware of is~\cite[App. F]{cao2023parallel}, in which the database items are placed at the nodes of a binary tree and only batches of items lying along a root-to-leaf path are retrieved, which correspond to Merkle proofs in a Merkle tree~\cite{Merkle1988}.    

\subsection{Solution for General ECBC}


If $t = 1$, we need the constraints $n\ge k$ and $m\ge k+r$. If $r$ servers are unavailable and there are fewer than $k+r$ servers, it is not possible to retrieve $k$ files.
Moreover, the lower bound on the number of servers storing each file is $r + 1$, which ensures all $r$ servers storing the same file are active. Therefore, we get the necessary condition $N\ge (r+1)n.$ \medskip

We demonstrate that the number $N$ is the minimum for all values of $m$ and $n$. We first construct the ECBC matrix:
\begin{enumerate}
\item[(1)] We store the files with the indices congruent to $i$ modulo $k$ on the servers with indices $i\in \{0,1,2,\ldots ,k-1\}$. This way, each file is stored once, resulting in a total storage count of $n$.
\item[(2)] For the servers with indices $i\in \{k,k+1,\ldots ,k+r-1\}$, we store all files. So the total storage count to $rn$.
\end{enumerate}

Hence, the total number of storage is ${{N}_{\min }}=n+rn=(r+1)n.$ It is not difficult to check that this construction satisfies all the given ECBC conditions:
\begin{itemize}
\item If $r$ servers in group (2) are unavailable, we still have servers in group (1) to retrieve the files. Note that we only consider $k$ consecutive files, two files with indices that have a difference greater than $k$ will not be retrieved simultaneously. This is the main idea of the construction.
\item If some servers in group (1) is unavailable, we can replace it with any server in group (2), since each server in that group stores all files. 
\end{itemize}

For the case where $t>1$, the above result remains tight, since the bound  $(r+1)n$ holds for all values of $t$. It just enables us to reduce the number of servers from $k$ to $\left\lceil \frac{k}{t} \right\rceil $.

So, in the general ECBC problem, the solution for consecutive files version is ${{N}_{\min }}=(r+1)n.$

\textbf{Example.} For $n=9, m=5, k=3, r = 2$ then $N_{\min} =(r+1)n = 27.$ The construction is showed on the Table II. \medskip

In this case, by the periodically distribution of the files among servers, one can modify the given matrix $A$ when some of values $m, n$ or $r$ change as follows: if the number $n$ of files increases then add new columns to the matrix such that for each of them, the last $r$ cells are all filled by $1$, an extra number $1$ will be filled base on the modulo $k$; if the number $m$ of servers increases, it does not matter since $k+r$ servers are enough for the system; and if the upper bound $r$ increases then the number of servers must increases too, each of the new servers will store all files there.
When number $k$ or $t$ changes, it is required to edit quite a lot to get the new optimal system.

\subsection{The Relations between CBC/ECBC and NAGT}


NAGT is a technique that aims to minimize the number of tests required to identify some particular defected items among all items. The items are grouped and tested together to improve the accuracy. Tests are performed independently, can be done in parallel, and are represented by a measurement matrix.

In~\cite{jia2019erasure}, Jia {\it et al.} studied the connections between ECBC and NAGT through the incident binary matrices, focusing on disjunct and separable matrices. In \cite{colbourn1999group,bui2021improved}, NAGT with consecutive positives has been studied, where positive items are consecutive in a linear order. This idea inspired the ECBC problems, where the files are also linearly ordered and the retrieval files are consecutive. In this special aspect, the optimal value of storage in the ECBC problem can easily be obtained as the previous part. \medskip

Furthermore, NAGT is a well-studied problem with numerous theories and research results have been done. By discovering new relationships between two problems ECBC and NAGT, we can potentially apply ideas from one to another. Here, we compare some other aspects of these two problems.


\begin{itemize}
\item Objectives: ECBC aims to store $n$ files in $m$ servers, while NAGT aims to test $N$ items with $T$ tests.
\item Requested items: ECBC requires $k$ files (a subset of the original files), while NAGT requires $D$ defect items (a subset of the original items).
\item The constrains: in ECBC, it is required $k \le \min\{ m,n\}$; in NAGT, it is required $D \le \min\{N,T\}$.
\item Classification of storage objects: ECBC classifies servers into two groups - those that contain at least one required file and others. In NAGT, tests are classified into positive class and negative class.
\item Erasure aspects: in ECBC, there are at most $r$ unavailable servers while in NAGT, there are $R$ inhibitors that can join some tests and change the result of tests.
\end{itemize}
Despite some similar aspects of these two problems, the main difference lies in the finding of subsets. ECBC finds a subset of servers based on the storage of $k$ files, while NAGT finds a subset of items based on the test results.

\section{Conclusions}
In this work, we extended the previous results on Erasure Combinatorial Batch Codes to accommodate the most general case when $r\geq 0$ and $t \geq 1$, using an generalization of Hall's theorem.
We also explored the relationship between ECBC and NAGT problems, and found an optimal solution for a special case when retrieved files are arranged in specific consecutive linear sequence. This provides new insights into the problem and a new approach to solve it.

In the future, we will consider the way to slightly modify the given ECBC matrix $A$ in case that some of values $n,m,k,t,r$ change (since the modification of whole matrix requires plenty of computation). We also aim to explore the practical applications of ECBC in real-time data storage for IoT devices and multiple servers. Our research also has significant potential in industries requiring reliable data storage, such as biomedical and agriculture, thus we would like to further investigate the ECBC concept and its potential use in these industries.

\section{Acknowledgments}
\label{sec:ack}

This article was funded in part by University of Science, VNU-HCM under Grant No. CNTT2022$-$11.

We also would like to express our gratitude to Mr. Bui Van Thach (National University of Singapore) who has shared with us a lot of valuable ideas to complete this research.

\bibliographystyle{ieeetr}
\balance
\bibliography{bibli}

\end{document}